\title{Polynomial-time approximability\\of the $k$-{\sc Sink Location} problem}
\author{R\'emy Belmonte \and Yuya Higashikawa \and Naoki Katoh \and Yoshio Okamoto}
\author{
\mbox{
R\'emy Belmonte$^1$
\hspace{1cm}
Yuya Higashikawa$^1$
}
\mbox{
Naoki Katoh$^1$
\hspace{1cm}
Yoshio Okamoto$^2$
}
}
\institute{
$^1$Department of Architecture and Architectural Engineering,\\
C-2 cluster, 4 Katsura, Nishikyo-ku, Kyoto University, Japan\\
\texttt{remybelmonte@gmail.com, \{as.higashikawa,naoki\}@archi.kyoto-u.ac.jp}\\
\medskip
$^2$Department of Communication Engineering and Informatics,\\
Graduate School of Informatics and Engineering,\\
The University of Electro-Communications\\
\texttt{okamotoy@uec.ac.jp}
}
\begin{document}

\maketitle

\begin{abstract}
A dynamic network ${\cal N} = (G,c,\tau,S)$ where $G=(V,E)$ is a graph, integers $\tau(e)$ and $c(e)$ represent, for each edge $e\in E$, the time required to traverse edge $e$ and its nonnegative capacity, and the set $S\subseteq V$ is a set of sources. In the $k$-{\sc Sink Location} problem, one is given as input a dynamic network ${\cal N}$ where every source $u\in S$ is given a nonnegative supply value $\sigma(u)$. The task is then to find a set of sinks $X = \{x_1,\ldots,x_k\}$ in $G$ that minimizes the routing time of all supply to $X$. Note that, in the case where $G$ is an undirected graph, the optimal position of the sinks in $X$ needs not be at vertices, and can be located along edges. Hoppe and Tardos\cite{HT00} showed that, given an instance of $k$-{\sc Sink Location} and a set of $k$ vertices $X\subseteq V$, one can find an optimal routing scheme of all the supply in $G$ to $X$ in polynomial time, in the case where graph $G$ is directed. Note that when $G$ is directed, this suffices to obtain polynomial-time solvability of the $k$-{\sc Sink Location} problem, since any optimal position will be located at vertices of $G$.  However, the computational complexity of the $k$-{\sc Sink Location} problem on general undirected graphs is still open. In this paper, we show that the $k$-{\sc Sink Location} problem admits a fully polynomial-time approximation scheme (FPTAS) for every fixed $k$, and that the problem is $W[1]$-hard when parameterized by $k$.
\end{abstract}

\section{Introduction}

One of the most well-studied family of problems in algorithmic graph theory is that of so-called network flow problems. Such problems originate from the seemingly simple idea that one wishes to move from one location to another in a network, while minimizing some cost function. From this first idea, a wealth of natural problems were derived, where one usually seeks to transport a set of items from a prescribed set of sources to a non-necessarily predetermined set of target locations. In this paper, we study the problem known as $k$-{\sc Sink Location} on dynamic networks, where one seeks to find a set of locations in a network which minimizes the amount of time required to evacuate all the supply located on each vertex of the graph to these locations. A dynamic network is graph with a prescribed set of sources, and nonnegative capacities and integral transit times for every edge. A similar problem, called {\sc Quickest Transshipment}, was studied by Hoppe and Tardos~\cite{HT00}. In this problem, one is given a set of sinks together with the sources and each sink has a demand value. The task is then to send the supply from the sources to the sinks as quickly as possible, in such a way that each sink receives exactly as much supply as its demand value.  They showed that the problem is polynomial-time solvable in the case where the dynamic network is directed, which can easily be seen to imply polynomial time solvability of the $k$-{\sc Sink Location} problem on directed graphs, for every fixed $k$. Note that polynomial-time solvability for the directed case does not readily imply solvability for the undirected case. This is due to the fact that while in a directed network an optimal solution can only be located at a vertex, an optimal solution in an undirected network may be located at any point along an edge.

{\em Related work.} For the 1-{\sc Sink Location} problem, Kamiyama, Katoh and Takizawa~\cite{KKT06,KKT09} gave efficient algorithms for several cases where the structure of the network satisfies requirements regarding the length of the edges and the structure of the given graph. Mamada et al.~\cite{MUMF06} gave an $O(n\cdot \log^2{n})$ algorithm for the case where the input graph is a tree. Most of the recent work on the {\sc Sink Location} problem has been considering computationally harder variants of the problem on restricted graph classes. In~\cite{HGK14}, Higashikawa, Golin and Katoh considered the generalized version of the problem where one seeks not only to find a single sink, but some given number $k$, when the input graph is an undirected path. They showed that this problem can be solved in $O(kn)$ time in the so-called minimax setting, and $O(n^2 \cdot \min\{k,2^{\sqrt{\log{k} \log\log{n}}}\})$ in the minisum setting. In~\cite{HGK14a}, the same authors considered the minimax  regret version of the problem when the input graph is a tree and the edges have uniform capacities, and showed that the problem can be solved in $O(n^2 \log^2 n)$ time. As an intermediate result, they provide an $O(n \log n)$ algorithm to find a sink location that minimizes the evacuation time.

{\em Some definitions.} We now define formally the notion of dynamic network and the $k$-{\sc Sink Location} problem. A {\em dynamic network} ${\cal N} = (G, c, \tau, S)$ consists of a graph $G=(V,E)$ where each edge $e\in E(G)$ is given a {\em capacity} $c(e)\in\mathbb{N}$ and a {\em transit time} $\tau(e)\in \mathbb{N}^+$, and a prescribed set of {\em sources} $S\in V$. The notion of dynamic network was first introduced by Ford and Fulkerson~\cite{FF62}. In the $k$-{\sc Sink Location} problem, one is given a dynamic network ${\cal N}$ together with a supply value $\sigma(u)\in \mathbb{N}$ for each vertex $u$ of the set $S$ of sources of ${\cal N}$. The task is then to find $k$ positions $X=\{x_1,\ldots,x_k\}$ in the graph, called {\em sinks}, which minimizes the amount of time required to send the supply $\sigma(u)$ from each vertex $u\in S$ to the positions $X$. A position $x$ is defined as a triple $(uv,\tau(ux),\tau(vx))$, where $e$ is an edge of $G$, and $\tau(pq)\in\mathbb{N}$ represent the time required to travel from position $p$ to $q$, and $\tau(ux)+\tau(xv)=\tau(uv)$. Observe that the number of positions for every edge $e$ is exactly $\tau(e)+1$, and can therefore be exponentially large in the size of the input graph $G$.
In the {\sc Quickest Transshipment} problem, we are also given a function $\sigma: S \rightarrow \mathbb{Z}\setminus \{0\}$. For every vertex $s\in S$, if $\sigma(s) > 0$ then $s$ is called a {\em source}, otherwise it is called a {\em sink}. The question is then to send all the supply from sources to the sinks in a minimum amount of time, in such a way that each sink $s$ receives exactly $-\sigma(s)$ units of supply. Note that this immediately implies $\sum_{s\in S}\sigma(s)=0$. Hoppe and Tardos~\cite{HT00} proved that {\sc Quickest Transshipment} can be solved in polynomial time on directed graphs.
For additional terminology and notation, refer to the monograph by Diestel~\cite{Die05}.

{\em Our contribution.} We study the computational complexity of the $k$-{\sc Sink Location} problem on general undirected graphs and prove the following result:

\begin{theorem}
\label{thm:FPTAS}
The $k$-{\sc Sink Location} problem admits an FPTAS on undirected dynamic networks for every fixed $k$. 
\end{theorem}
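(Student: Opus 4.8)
The plan is to pair the exact evaluation primitive implied by Hoppe--Tardos with a scale-adaptive discretization of the candidate sink positions, which are discrete but exponentially numerous.

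\emph{An evaluation oracle.} First I would build a polynomial-time subroutine that, given a fixed set $X=\{x_1,\dots,x_k\}$ of $k$ positions (each possibly in the interior of an edge), returns the exact optimal evacuation time $\Theta(X)$. For every position $x_i=(uv,\tau(ux_i),\tau(x_iv))$ I subdivide $uv$ by inserting a new vertex at $x_i$, splitting it into two edges of capacity $c(uv)$ and (integral) transit times $\tau(ux_i)$ and $\tau(x_iv)$. I then replace each undirected edge by a pair of oppositely directed arcs of the same capacity and transit time, add a super-sink $t$ reachable from every $x_i$ through a zero-transit, infinite-capacity arc, and call the Quickest Transshipment algorithm with the sources supplying $\sigma$ and $t$ demanding $\sum_{u\in S}\sigma(u)$; its value is exactly $\Theta(X)$. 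One subtlety to check is that bidirecting the edges preserves the optimal evacuation time of the undirected instance, which holds because any flow sent simultaneously in both directions along an edge can be cancelled without delaying completion.

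\emph{A scale-adaptive candidate set.} Although each edge $e$ carries $\tau(e)+1$ admissible integer positions, an optimal sink that actually receives supply must lie within distance $\Theta^*$ of some source, since the supply it absorbs travels at most $\Theta^*$; sinks absorbing nothing may be discarded. Hence it suffices to search near the sources. Guessing a scale $\hat\Theta\in\{1,2,4,\dots\}$ up to a polynomial-bit-size upper bound $U$ on $\Theta^*$ (for instance the evacuation time of the best all-at-one-vertex placement), I set the spacing $\delta=\max\{1,\lfloor \epsilon\hat\Theta/n\rfloor\}$ and, on each edge, keep only the positions at integer multiples of $\delta$ lying within distance $\hat\Theta$ of a source (computable from single-source shortest transit-time distances). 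This yields $O(\hat\Theta/\delta)=O(n/\epsilon)$ candidates per edge, hence a candidate set $\mathcal{C}$ of size $O(mn/\epsilon)$.

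\emph{Approximation guarantee and enumeration.} The approximation bound follows from a $1$-Lipschitz property of the evacuation time under sink displacement: moving each sink along its edge by at most $\delta$ increases $\Theta$ by at most $\delta$, proved by reusing an optimal flow over time and letting the supply that reaches a sink continue, at the same edge capacity, the extra $\le\delta$ to the displaced position. For the guess with $\hat\Theta\in[\Theta^*,2\Theta^*]$ every useful optimal sink is covered by $\mathcal{C}$ within distance $\delta\le\epsilon\Theta^*$ of a grid point (and when $\delta=1$ every relevant integer position is present, so the rounding is exact), giving a solution of value at most $(1+\epsilon)\Theta^*$. For fixed $k$ there are only $O((mn/\epsilon)^k)$ size-$k$ subsets of $\mathcal{C}$; evaluating each with the oracle, over all $O(\log U)$ guesses, and returning the best, runs in time polynomial in the input size and $1/\epsilon$.

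\emph{Main obstacle.} The crux is the discretization: reconciling that the grid must be fine enough for a $(1+\epsilon)$ guarantee yet remain polynomial, even though transit times (hence the number of positions per edge) may be exponential in the input size and the optimum may be far smaller than the longest edge. The two ideas that resolve this are that only the portion of each edge within distance $\approx\Theta^*$ of a source can host a useful sink, and that the unknown scale can be handled by geometric guessing because the exact oracle lets us safely take the best candidate produced over all guesses. Establishing the precise Lipschitz constant and confirming the undirected-to-directed reduction for flows over time are the remaining technical points.
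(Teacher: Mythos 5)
Your proposal is correct and reaches the same conclusion, and its first ingredient (the evaluation oracle obtained by subdividing edges at the sink positions, bidirecting the graph, attaching a zero-transit super-sink, and invoking Hoppe--Tardos) coincides with the paper's Lemma~\ref{lem:reduction}. The discretization, however, is genuinely different. The paper places a \emph{per-edge relative} grid of $\min\{\tau(e),\lceil 1/\varepsilon\rceil\}$ candidates on each edge $e$ (spacing roughly $\varepsilon\tau(e)$), so the additive rounding error on edge $e$ is about $\varepsilon\tau(e)/2$; to turn this into a multiplicative $(1+\varepsilon)$ factor the paper must first normalize the optimal solution (at most one interior sink per edge, or both endpoints) and then prove the structural lower bound $OPT(x)\geq \tau(e)/2$ for any interior sink $x$ on $e$, using that such a sink receives supply through both endpoints. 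You instead build a \emph{globally scaled} grid: geometrically guess the optimum value $\hat\Theta$, observe that any useful sink lies within transit distance $\hat\Theta$ of a source, and sample that region at spacing $\delta=O(\varepsilon\hat\Theta/n)$, so the additive error $\delta$ is already at most $\varepsilon\cdot OPT$ and a plain $1$-Lipschitz displacement argument suffices. What each buys: the paper's grid is smaller ($O(|V|+|E|/\varepsilon)$ candidates, no guessing loop, no need for an a priori upper bound on the optimum), while yours avoids entirely the normalization of $X^*$ and the somewhat delicate claim that every interior optimal sink is fed from both sides of its edge, at the cost of an extra $O(\log U)$ factor and the (easy) lemma restricting useful sinks to the $\hat\Theta$-neighborhood of the sources. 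Both displacement arguments share the same mild informality about re-routing flow past a moved sink without violating edge capacities, so you are at the paper's level of rigor there; your remaining technical points (the Lipschitz constant and the undirected-to-directed reduction) are exactly the ones the paper also treats briefly.
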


A parameterized problem is said to be FPT  by some parameter $k$ if there is an algorithm that solves the problem in time $f(k)\cdot n^{O(1)}$. Intuitively, a $W[1]$-hard problem is a problem that is unlikely to admit an FPT algorithm. We refer the reader to~\cite{FG06,Nie06} for more information about parameterized complexity and algorithms.
We complement Theorem~\ref{thm:FPTAS} by showing that is unlikely to be significantly improved:

\begin{theorem}
\label{thm:W-hard}
The $k$-{\sc Sink Location} problem is $W[1]$-hard when parameterized by $k$. 
\end{theorem}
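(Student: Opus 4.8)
The plan is to prove $W[1]$-hardness by a parameterized reduction from a known $W[1]$-hard problem whose natural parameter is a cardinality constraint, so that solution size maps cleanly onto the number $k$ of sinks.

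Concretely, I would reduce from \textsc{Multicolored Clique}: given a graph $H=(V_H,E_H)$ whose vertex set is partitioned into $k$ color classes $V_1,\dots,V_k$, decide whether $H$ contains a clique using exactly one vertex from each class. This problem is well known to be $W[1]$-hard parameterized by $k$, and its partition structure is convenient because it lets us force a feasible solution to make exactly one choice per class. Since $k$-\textsc{Sink Location} is a minimization problem, I would work with its decision version: given a dynamic network $\mathcal{N}$, an integer $k'$, and a threshold $T$, can one place $k'$ sinks so that all supply reaches them within time $T$? The aim is to build, in polynomial time, an instance $(\mathcal{N},k',T)$ with $k'$ a function of $k$ only (in the cleanest version, $k'=k$) such that the answer is \textsc{Yes} if and only if $H$ admits a multicolored clique.

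The construction would consist of two interacting layers. First, a \emph{selection layer}: for each class $V_i$ I would build a gadget containing one candidate position per vertex of $V_i$, together with supply and transit times calibrated so that evacuating that gadget within time $T$ requires placing a sink inside it, and moreover that the sink must sit at the position encoding a single chosen vertex $v_i\in V_i$. As there are $k$ gadgets and only $k'=k$ sinks available, a feasible solution is forced to place exactly one sink per gadget, so it encodes a transversal $(v_1,\dots,v_k)$. Second, an \emph{adjacency layer}: for every pair of classes $\{i,j\}$ I would route a controlled amount of supply through an edge-verification structure whose capacities and transit times are chosen so that this supply can clear by time $T$ only when $v_iv_j\in E_H$; that is, the quickest-transshipment value stays at or below $T$ exactly when the selected vertices are pairwise adjacent. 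The equivalence is then argued in both directions: a multicolored clique yields sink placements and a routing meeting $T$, while conversely any placement meeting $T$ must, by the selection layer, pick a transversal, and by the adjacency layer, one that forms a clique.

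The hard part will be the adjacency layer. With only $k$ sinks available, the $\binom{k}{2}$ pairwise constraints cannot be checked by dedicating a sink to each edge; they must be enforced \emph{indirectly}, through the feasibility of the quickest transshipment, i.e.\ through the interplay of capacities and transit times against a single global bound $T$. The delicate engineering is to set these quantities so that the evacuation time crosses $T$ \emph{precisely} at the clique condition, leaving no slack that a clever routing could exploit. The undirected setting introduces a further complication that must be ruled out: because an optimal sink may lie in the interior of an edge rather than at a vertex, the construction must guarantee that no such unintended mid-edge placement can beat $T$ unless it corresponds to a legitimate vertex selection. Finally, to obtain a genuine polynomial-time, and hence parameterized, reduction, I would check that all supplies, capacities, transit times, and the threshold $T$ remain bounded by a polynomial in $|V_H|$, so that the output instance has size polynomial in that of $H$ while its parameter $k'$ depends only on $k$.
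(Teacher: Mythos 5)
Your proposal is a plan rather than a proof: the component you yourself identify as ``the hard part''---the adjacency layer that must enforce all $\binom{k}{2}$ pairwise adjacency constraints indirectly through a single global time bound, using no additional sinks---is never actually constructed. Nothing in the write-up indicates how capacities and transit times could be calibrated so that the quickest-transshipment value crosses the threshold $T$ exactly at the clique condition, and this is precisely where all the difficulty of a \textsc{Multicolored Clique} reduction would reside. The selection layer is likewise only described by its intended effect. As it stands, there is no argument to evaluate for correctness; the reduction might well be realizable, but the burden of the proof is entirely in the unbuilt gadgets.

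The paper avoids this difficulty altogether by reducing from $k$-\textsc{Hitting Set} and choosing the time bound $T=1$, which collapses the routing question to pure adjacency: take the bipartite incidence graph on $U\cup\mathcal{X}$, give every edge unit length and unit capacity, and place one unit of supply on each set-vertex. All supply can reach $k$ sinks within one time unit if and only if every set-vertex is adjacent to a sink, i.e., the sinks form a hitting set; a short argument rules out useful sink placements in edge interiors. No timing or capacity engineering is needed, and since $k$-\textsc{Hitting Set} is $W[2]$-complete, the paper in fact obtains $W[2]$-hardness, which is stronger than the $W[1]$-hardness your reduction from \textsc{Multicolored Clique} would yield even if completed. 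If you want to salvage your approach, the lesson from the paper is to pick a source problem whose feasibility condition is a local adjacency property checkable within one time step, rather than one requiring global consistency among the selected positions.
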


\section{Polynomial-time approximation scheme}

In this section, we prove our main result, namely that the $k$-{\sc Sink Location} problem admits an FPTAS on general undirected graphs, for every fixed $k$. To that end, we will first need to show that, given an instance $({\cal N}, \sigma)$ of the $k$-{\sc Sink Location} problem and a set of $k$ positions $X$ in $G$, one can compute the minimum amount of time required to send all the supply to $X$. Note the following result of Hoppe and Tardos~\cite{HT00}:
\begin{theorem}[\cite{HT00}]
The {\sc Quickest Transshipment} problem can be solved in polynomial time on directed graphs.
\end{theorem}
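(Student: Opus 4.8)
The plan is to reduce the quickest transshipment problem to two subtasks: determining the minimum time horizon $T^*$ within which all the supply can feasibly be routed, and then constructing an actual flow over time that achieves this bound. Since all transit times are positive integers, the optimal horizon $T^*$ is an integer bounded by a quantity of polynomial bit-length, so that $T^*$ can be located by binary search as soon as a polynomial-time \emph{feasibility oracle} is available: given an integer $T$, decide whether there exists a flow over time that routes every unit of supply from the sources to the sinks within $T$ time steps, respects all capacities, and delivers to each sink exactly its prescribed demand. Monotonicity of feasibility in $T$ is what makes the binary search correct.

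First I would recall the classical device of the \emph{time-expanded network}, in which each vertex $v$ is replicated once for every integer time step $0,1,\ldots,T$, and each edge $e=uv$ of transit time $\tau(e)$ becomes a family of arcs joining the copy of $u$ at time $t$ to the copy of $v$ at time $t+\tau(e)$. A feasible flow over time within horizon $T$ corresponds exactly to a static flow in this expanded network, so in principle feasibility reduces to a single static maximum-flow computation. The difficulty --- and the reason a naive argument yields only a pseudo-polynomial bound --- is that the time-expanded network has size proportional to $T$, which is exponential in the bit-length of the input.

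To obtain a genuinely polynomial feasibility oracle, I would appeal to the min-cut characterization of maximum flows over time due to Ford and Fulkerson, extended to several terminals. Feasibility within horizon $T$ is equivalent to a system of inequalities, one for each subset $A$ of the terminals, asserting that the net demand of $A$ does not exceed the maximum amount that can be delivered into $A$ within time $T$. The crucial structural fact is that the set function mapping a terminal set $A$ to this maximum deliverable amount is submodular; hence the entire exponential family of inequalities can be verified by a single \emph{submodular function minimization}, which runs in polynomial time and whose evaluation oracle is itself a maximum-flow-over-time computation, solvable in polynomial time via temporally repeated flows. This yields the required oracle, and binary search then pins down $T^*$.

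Finally, having identified $T^*$, I would construct an optimal flow using the \emph{lexicographically maximal flow over time}: one fixes an ordering of the terminals and greedily, in that order, pushes as much flow as possible into each terminal within the horizon $T^*$. Such a lex-max flow can be obtained from a polynomial number of parametric maximum-flow computations, and one shows that at $T=T^*$ it simultaneously meets every sink's demand; decomposing it into a temporally repeated flow then yields a compact, polynomially sized description of the routing scheme. The main obstacle throughout is to avoid ever materializing the exponentially large time-expanded network: both the feasibility test and the flow construction must be carried out implicitly, and it is precisely the submodularity of the cut function together with the chain structure of lexicographically maximal flows that makes this possible.
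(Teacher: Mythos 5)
The paper does not prove this statement at all: it is quoted verbatim from Hoppe and Tardos~\cite{HT00} and used as a black box (it feeds into Lemma~\ref{lem:reduction}), so there is no in-paper proof to compare yours against. Judged against the actual argument in \cite{HT00}, your outline captures the essential ingredients correctly: the feasibility criterion for a horizon $T$ via the submodular ``maximum amount deliverable into a terminal set'' function $o^T$, its verification by submodular function minimization with an evaluation oracle given by lexicographically maximal flows over time (computed via temporally repeated flows, so the time-expanded network is never built explicitly), a search over $T$ exploiting integrality of the optimal horizon, and a final construction phase based on lex-max flows.

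One step in your sketch is oversimplified, though. A lexicographically maximal flow over time for an arbitrary terminal ordering does \emph{not} in general meet every sink's demand exactly at $T=T^*$; it only maximizes prefix sums with respect to the chosen order. Hoppe and Tardos need the chain of tight sets produced by the feasibility test to determine a suitable ordering and to modify the network (introducing auxiliary terminals that cap how much each original terminal sends or receives) so that the resulting lex-max flow, restricted back to the original network, is a feasible transshipment. Without that step, ``one shows that it simultaneously meets every sink's demand'' is an unjustified leap. As a reconstruction of a cited external theorem this is a reasonable high-level account, but it is a proof sketch of \cite{HT00}, not something the present paper establishes or needs to establish.
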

Recall that in the {\sc Quickest Transshipment} problem, one is given the set of sinks $X\subseteq V(G)$, and each sink $x\in X$ is given a demand value that must be met exactly. We show that the $k$-{\sc Sink Location} problem on undirected graphs can be reduced to the {\sc Quickest Transshipment} problem on directed graphs when the set of sinks is given. This will later allow us to use Hoppe and Tardos' algorithm to evaluate the time required to send all the supply to a given set of sinks.

\begin{lemma}
\label{lem:reduction}
There is an algorithm that, given an instance $({\cal N}, \sigma)$ of the $k$-{\sc Sink Location} problem where $G$ is undirected and a set of $k$ positions $X$ in $G$ such that no two positions in $X$ lie on the same edge, computes the minimum amount of time required to send all supply to $X$ and runs in polynomial time.
\end{lemma}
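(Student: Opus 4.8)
The plan is to reduce the task to a single instance of the {\sc Quickest Transshipment} problem on a \emph{directed} network, so that the polynomial-time algorithm of Hoppe and Tardos~\cite{HT00} can be applied as a black box. The reduction proceeds in three transformations of the instance: first make every position of $X$ coincide with a vertex, then replace the objective ``deliver all supply somewhere in $X$'' by a transshipment with fixed demands, and finally orient the undirected network. The delicate point will be to show that the last transformation does not change the optimal value.

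First I would turn the positions into vertices. For each position $x=(uv,\tau(ux),\tau(vx))$ lying strictly inside an edge $uv$, I subdivide $uv$ by introducing $x$ as a new vertex and replacing $uv$ by two edges $ux$ and $xv$ of capacity $c(uv)$ and transit times $\tau(ux)$ and $\tau(vx)$; positions already located at a vertex require no change. Since no two positions of $X$ share an edge, each edge is subdivided at most once, the new transit times are bounded by $\tau(uv)$, and at most $k$ vertices are added. The resulting undirected network $\mathcal{N}'$ admits exactly the same dynamic flows as $\mathcal{N}$, and now every sink of $X$ is a vertex of $\mathcal{N}'$.

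Second, to encode that it suffices for the supply to reach the \emph{set} $X$, with no prescribed split among the individual sinks, I introduce a super-sink $t$ together with an arc $(x_i,t)$ of transit time $0$ and capacity $\sum_{u\in S}\sigma(u)$ from each $x_i\in X$ to $t$. Assigning to $t$ the demand $\sum_{u\in S}\sigma(u)$, keeping each source supply $\sigma(u)$, and treating the $x_i$ as intermediate vertices yields a transshipment instance whose completion time equals the minimum evacuation time to $X$: any unit reaching $t$ must pass through some $x_i$, and the zero-length arcs make arrival at $x_i$ and arrival at $t$ simultaneous. Third, I orient $\mathcal{N}'$ by replacing each undirected edge $e=\{u,v\}$ with two antiparallel arcs $(u,v)$ and $(v,u)$, each of capacity $c(e)$ and transit time $\tau(e)$. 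On the resulting directed network, Hoppe and Tardos' algorithm computes the optimal transshipment time in polynomial time, and since every transformation enlarges the instance only polynomially, the whole procedure runs in polynomial time.

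The main obstacle is to prove that the orientation step preserves the optimum, that is, that the minimum transshipment time in the directed antiparallel network equals the minimum evacuation time in the undirected network $\mathcal{N}'$. One inequality is immediate, since the directed network is a relaxation of the undirected one and can therefore only route supply faster. For the converse I would argue that an optimal dynamic flow never needs to use an edge in both directions at the same moment: any two opposing flows crossing the same edge simultaneously can be cancelled without delaying any unit of supply, after which the flow obeys the undirected capacity constraint and is realizable in $\mathcal{N}'$. Equivalently, one can observe that every cut separating the sources from $t$ has the same capacity in both networks and invoke the cut characterization of feasible dynamic transshipments underlying~\cite{HT00}. Carrying out this cancellation, or verifying the cut equivalence, carefully in the presence of transit times is the technical heart of the argument.
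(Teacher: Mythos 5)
Your proposal follows essentially the same route as the paper's proof: subdivide edges at the sink positions, add a super-sink joined to each $x_i$ by zero-transit arcs of capacity $\sum_{u\in S}\sigma(u)$ with the total demand placed on it, replace each undirected edge by two antiparallel arcs of the same capacity and transit time, and invoke Hoppe and Tardos' algorithm. The equivalence argument is also the same one the paper uses (an optimal routing never needs an edge in both directions simultaneously, so opposing flows can be cancelled); you are in fact more candid than the paper in flagging that carrying out this cancellation in the presence of transit times is the technical heart, a step the paper itself only asserts.
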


\begin{proof}
We reduce our problem to {\sc Quickest Transshipment} on directed graphs. Since Hoppe and Tardos~\cite{HT00} proved that the problem is polynomial-time solvable in that case, this will immediately imply our lemma. Given an instance $({\cal N}, \sigma)$ of the $k$-{\sc Sink Location} problem with ${\cal N}=(G,c,\tau,S)$, where $G$ is undirected, and a set of positions $X=\{x_1,\ldots,x_k\}$ such that $x_i=(u_iv_i,\tau_i^u,\tau_i^v)$ in $G$, we create an instance $({\cal N}', \sigma')$ of the {\sc Quickest Transshipment} problem with ${\cal N}'=(G',c',\tau',S')$, where $G'=(V',E')$ is directed. Our construction is as follows:
\begin{itemize}
\item $V'=V \cup X \cup \{s^*\}$, with $X=\{x_1,\ldots,x_k\}$;
\item $S'=S\cup \{s^*\}$;
\item For every edge $ww'\in E\setminus \bigcup_{i=1}^k\{u_iv_i\}$, we create~2 new opposite edges $ww'$ and $w'w$ such that $c'(ww')=c'(w'w)=c(ww')$ and $\tau'(ww')=\tau'(w'w)=\tau(ww')$;
\item For every edge $uv$ such that there is a position $x\in X$ that lies on $uv$, we replace $uv$ with~4 edges $ux,xu,vx,xv$ such that $c'(ux)=c'(xu)=c'(vx)=c'(xv)=c(uv)$ and $\tau'(ux)=\tau'(xu)=\tau_u$ and $\tau'(vx)=\tau'(xv)=\tau_v$;
\item We add edges $x_is^*$ for every $1\leq i\leq k$ and set $c(x_is^*)=\sum_{s\in S}\sigma(s)$ and $\tau(x_is^*)=0$;
\item For every vertex $w\in S, \sigma'(w)=\sigma(w)$, $\sigma'(x_i)=0$ for every $1\leq i\leq k$ and $\sigma'(x^*)=-\sum_{s\in S}\sigma(s)$.
\end{itemize}
We now claim that the minimum amount of time required to send all supplies to $x$ in $({\cal N}, \sigma)$ is equal to the minimum amount of time required in $({\cal N}', \sigma')$.
The fact that any solution in $({\cal N}, \sigma)$ corresponds to an equivalent solution in $({\cal N}', \sigma)'$ follows from the fact that in $({\cal N}, \sigma)$, an edge is never traversed in both directions at a given time, and the length and capacity of every edge is the same as in $({\cal N}', \sigma)'$. Similarly, for every routing scheme in $({\cal N}', \sigma')$, there exists an equivalent routing that can be completed in the same amount of time where, at any given time, at most one edge out of every pair of opposite edges has non-zero flow in the routing. This concludes the proof of the lemma.
\qed
\end{proof}

We are now ready to describe our FPTAS for the $k$-{\sc Sink Location} problem on undirected graphs. Roughly speaking, we will ``guess'' near-optimal positions for the $k$ sinks by performing sampling at regular interval over each edge of $G$. One can then reduce the approximation ratio by increasing the amount of sampling points. In this section, given an instance of the $k$-{\sc Sink Location} problem on undirected graphs, we denote by $OPT(X)$ the minimum amount of time required to send all supplies to the set of positions $X$, and $OPT=\min\{OPT(X)\mid X=\{x_i=(u_iv_i,\tau(u_ix_i,\tau(x_iv_i))\}$, for all $1\leq i\leq k$, such that  $u_iv_i\in E$ and $\tau(u_ix_i)+\tau(x_iv_i)=\tau(u_iv_i)$.

\begin{proof}[of Theorem~\ref{thm:FPTAS}]
We describe an algorithm that takes as input an instance $({\cal N},\sigma)$ of the $k$-{\sc Sink Location} problem and $\varepsilon>0$ and returns a set of $k$ positions $X$ in $G$ such that $OPT(X) \leq (1+\varepsilon)\cdot OPT$. Let us define $t_e=\max\{1, \left\lfloor\varepsilon\cdot \tau(e)\rfloor\right.\}, \forall e\in E$. 
We first define a set ${\cal X}$ of positions in the following way: ${\cal X}$ consists of all the vertices of $G$, together with positions $X_{uv}=\{x_1.\ldots,x_{\ell_{uv}}\}, \forall uv\in E$, with  $\ell_{uv}=\min\{\tau(uv), \left\lceil \frac{1}{\varepsilon} \right\rceil\}$. Observe that $|{\cal X}| \leq |V| + \frac{|E|}{\varepsilon}$. Our algorithm then tries every possible set $X$ of $k$ positions in ${\cal X}$, computes $OPT(X)$ using Lemma~\ref{lem:reduction}, and returns $\min\{OPT(X) \mid X\subseteq {\cal X} \wedge |X|=k\}$. Our algorithm runs in time ${|{\cal X}| \choose k} \cdot H(n)$, where $H(n)$ is the running time of Hoppe and Tardos' algorithm. The running time of our algorithm is then $O((|V|+\frac{|E|}{\varepsilon})^k \cdot H(n))$, as desired.
To complete the proof, it only remains to show that there exists a set $X$ of $k$ vertices in ${\cal X}$ such that $OPT(X)\leq(1+\varepsilon)\cdot OPT$.
Consider a set of $k$ positions $X^*$ in $G$ such that sending all the supply in $G$ to positions in $X^*$ takes time $OPT$, i.e., $X^*$ is an optimal set of positions in $({\cal N},\sigma)$. First, we show that we may assume, without loss of generality, that at most~2 positions in $X^*$ lie on the same edge $uv$, and if exactly~2 positions of $X^*$ lie on $uv$, then these positions are exactly $u$ and $v$. Indeed, observe first that we may safely assume that no edge $uv$ contains more than~2 positions of $X^*$, since every unit of supply that is sent to a sink on $uv$ will have to pass through either the position $x$ in $X^*$ closest from $u$, or the position $y$ closest from $v$. Note that $u=x$ and $v=y$ may happen. Therefore, we may remove all positions of $X^*$ that lie on $uv$ other than $x$ and $y$. Additionally, since every unit of supply sent to $x$ and $y$ have to pass through either $u$ or $v$ in order to reach a sink, we may safely replace $x$ and $y$ with $u$ and $v$ in $X^*$, without increasing the total amount of time required to send the supply to the sinks.
Consider now the set $X\subseteq {\cal X}$ such that for every position $x\in X^*$, we add $x$ to $X$ if $x\in{\cal X}$, otherwise we add the position $x'$ in ${\cal X}$ closest from $x$. Note that if $x\not\in {\cal X}$, then $x\not\in V$, and therefore $x$ lies on a edge, and lies between two positions of ${\cal X}$. In case $x$ is equidistant from these two positions, we choose one of them arbitrarily. Moreover, since every edge $uv$ contains either at most~1 position of ${\cal X}$ or both $u$ and $v$, no two distinct positions of $X^*$ correspond to the same position $x'$ in $X$, and hence, each position $x\in X^*$ is associated with a position $x'\in X$ in a bijective manner.
We now claim that $X$ satisfies $OPT(X)\leq(1+\varepsilon)\cdot OPT$, as desired.
To prove this claim, we show that any routing to $X^*$ that can be achieved in time $OPT$ can be transformed into a new routing to $X$ that can be achieved in time at most $OPT+\max\{\frac{t_e}{2} \mid e\in E \wedge \exists x\in X, x\in E\}$. This immediately follows from the fact that for every pair of positions $x$ and $x'$, the supply sent to $x$ either passes through $x'$, in which case it can simply stop there, or it can be sent to $x'$ as soon as it reaches $x$, without violating the capacity constraint. Since $x'$ is chosen to be closest from $x$ among the positions in ${\cal X}$, and the distance between two positions of ${\cal X}$ lying on an edge $e$ is at most $t_e$, we obtain that the supply reaches $x'$ in the new routing at most $\frac{t_e}{2}$ units of time after it reaches $x$. Note that, if we denote by $OPT(x)$ the time at which the last unit of supply reaches $x$ in the original routing, and $OPT'(x')$ the time at which the last unit of supply reaches $x'$ in the modified routing, we have for every pair of positions $x,x'$:
\[OPT'(x') \leq OPT(x)+\frac{t_e}{2}\]
Where $e$ is the edge that contains $x$ and $x'$. Moreover, observe that if $x\in X$, then $x'=x$ and $OPT'(x')=OPT(x)$, and if $x'\neq x$ then $t_e = \left\lfloor\varepsilon\cdot \tau(e)\rfloor\right.$.
Therefore, we have
\[OPT'(x')\leq OPT(x) + \frac{\varepsilon\cdot \tau(e)}{2}\]
Finally, observe that if $OPT'(x')\neq OPT(x)$, we way assume without loss of generality that $x\not\in \{u,v\}$, and hence there is at least~1 unit of supply reaching $x$ in the original routing that passes through $u$ and at least~1 unit that passes through $v$. Hence, for every position $x\in X$ lying on edge $e$:
\[OPT(x) \geq \frac{\tau(e)}{2}\]
Which in turn implies
\[OPT'(x') \leq OPT(x) + \frac{\varepsilon \cdot OPT(x)}{2} \leq OPT(x)(1+\varepsilon)\]
Since this inequality holds for every pair of positions $x$ and $x'$, we obtain the following inequality for the global solutions $OPT$ and $OPT'$:
\[OPT'\leq OPT(1+\varepsilon)\]
This concludes the proof of our main theorem.
\qed
\end{proof}

\section{Hardness of $k$-{\sc Sink Location} parameterized by $k$}

In this section, we provide a simple reduction from the well-known $W[2]$-complete problem $k$-{\sc Hitting Set}. In this problem, one is given as input a ground set $U$ and a family of sets ${\cal X}$, and the task is to find a subset $U'$ of $U$ containing at most $k$ elements, such that every set in ${\cal X}$ contains at least~1 element of $U'$.

\begin{theorem}
The $k$-{\sc Sink Location} problem is $W[2]$-hard parameterized by $k$ on undirected graphs, even when all the edges in the input graph $G$ have unit length and capacity.
\end{theorem}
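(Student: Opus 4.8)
The plan is to reduce from $k$-\textsc{Hitting Set}, which is $W[2]$-hard (indeed $W[2]$-complete) when parameterized by the solution size $k$. Given an instance $(U, \mathcal{X}, k)$ of $k$-\textsc{Hitting Set}, I would construct a dynamic network in which placing a sink at a vertex corresponds to selecting an element of $U$ into the hitting set, and the evacuation time is small if and only if every set in $\mathcal{X}$ is hit. The natural construction is a bipartite-like gadget graph: introduce one vertex $a_u$ for each element $u \in U$ and one vertex $b_Y$ for each set $Y \in \mathcal{X}$, and connect $a_u$ to $b_Y$ by a (unit-length, unit-capacity) edge precisely when $u \in Y$. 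Each set-vertex $b_Y$ is made a source carrying some supply, while the element-vertices $a_u$ are the candidate sink locations. The intended correspondence is that a hitting set $U' \subseteq U$ of size $k$ yields a set of $k$ sinks $\{a_u : u \in U'\}$ such that every source $b_Y$ is \emph{adjacent} to some chosen sink, so all supply can be evacuated within one time step (plus possibly a fixed small constant from the gadget).

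The key steps, in order, are as follows. First I would fix the supply values and verify that if $U'$ is a hitting set of size at most $k$, then placing sinks at $\{a_u : u \in U'\}$ lets every source reach a sink along a single edge, giving a routing time bounded by some explicit small threshold $T$. Second, and this is the crux, I would prove the converse: if there is a placement of $k$ sinks achieving evacuation time at most $T$, then the sinks can be assumed to be at element-vertices $a_u$, and the corresponding set of elements forms a hitting set. Here I must rule out ``cheating'' sink placements---sinks located on the interior of an edge, sinks placed at set-vertices $b_Y$, or configurations where supply from some $b_Y$ must travel a long distance because no adjacent $a_u$ is a sink. To forbid these, I would inflate the supply at each source (or attach auxiliary pendant structures) so that any source whose supply must travel distance strictly greater than the threshold forces the total evacuation time above $T$, and so that a sink not coinciding with some $a_u$ can never simultaneously serve all the sources it needs to.

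The main obstacle is the last point: because the graph is undirected, sinks may be placed in the interior of edges, and because capacities are bounded, congestion at a sink can by itself inflate the evacuation time even when all distances are short. I must therefore engineer the supplies and the threshold $T$ so that (i) an interior-edge sink is always strictly worse than snapping it to the nearer endpoint $a_u$, and (ii) a single sink cannot absorb the combined supply of too many sources within time $T$, which prevents solutions that hit many sets through a cleverly congested vertex rather than through genuine set-cover structure. Concretely I expect to choose supplies large enough that the binding constraint is the one-step-reachability of every source to \emph{some} distinct-enough sink, while keeping the total supply polynomially bounded so the reduction remains a valid FPT reduction. Once the supplies and $T$ are calibrated so that evacuation time at most $T$ is equivalent to ``every source is adjacent to a chosen element-sink,'' the equivalence with $k$-\textsc{Hitting Set} follows, and since the reduction maps parameter $k$ to parameter $k$ and runs in polynomial time, the $W[2]$-hardness---hence also the $W[1]$-hardness claimed in Theorem~\ref{thm:W-hard}---is established, with all edges having unit length and unit capacity as required.
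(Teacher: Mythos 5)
Your reduction is exactly the one the paper uses: a bipartite incidence graph between elements and sets with unit lengths and capacities, supply on the set-vertices, and threshold $T=1$. The calibration issues you worry about (congestion at sinks, interior-edge placements) dissolve immediately with one unit of supply per set-vertex, since distinct unit-capacity edges into a common sink carry their units in parallel, and unit-length edges admit no interior positions other than endpoints, so the paper's argument goes through without any supply inflation or auxiliary gadgets.
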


\begin{proof}
Given an instance $(U,{\cal X})$ of {\sc Hitting Set}, we build a graph $G=(V,E)$ such that $V=U\cup{\cal X}$, and two vertices $x\in U$ and $y\in{\cal X}$ are made adjacent whenever  $x\in y$. We then set $c(e)=\tau(e)=1$ for every edge $e\in E$ and add exactly~1 unit of supply to each vertex in ${\cal X}$. We now claim that $(U,{\cal X})$ admits a hitting set of size $k$ if and only if there exist $k$ sinks in $G$ to which all the supply can be sent in exactly~1 unit of time.

For the forward direction, observe that if $(U,{\cal X})$ has a hitting set of size $k$, then choosing those $k$ elements of $U$ as sinks will allow to send all the supply in a single unit of time.

For the converse direction, observe that since the vertices of ${\cal X}$ form an independent and every edge in $E$ has length~1, every sink that lies on an edge $uv$ with $u\in U$, but not at $u$, will only be able to receive supply from $v$. Hence, we may assume that all the sinks lie on vertices of $U$. It is then clear that the $k$ sinks must be adjacent to every vertex of ${\cal X}$, and form therefore a hitting set of $(U,{\cal X})$. 
\qed
\end{proof}

\section{Conclusion}

In this paper, we proved that the $k$-{\sc Sink Location} problem admits an FPTAS on general undirected graphs for every fixed $k$, but that, on the negative side, it is $W[2]$-hard when $k$ is not fixed, but used as a parameter instead. Two natural questions immediately follow. The first of these two questions is whether the $k$-{\sc Sink Location} problem can be solved in polynomial time for every fixed $k$. The second is whether the problem admits an FPTAS when parameterized by $k$, i.e., can be solved in time $f(k)\cdot poly(n,\frac{1}{\varepsilon})$ with approximation ratio $1+\varepsilon$, for some function $k$.

\end{document}